    \patchcmd{\maketitle}{\@fpheader}{}{}{}
\newcommand{\be}{\begin{equation}}
\newcommand{\ee}{\end{equation}}
\def\beqa{\begin{eqnarray}}
\def\eeqa{\end{eqnarray}}
\def\bean{\begin{eqnarray*}}
\def\eean{\end{eqnarray*}}
\newcommand{\eqn}[1]{(\ref{#1})}
\newcommand{\del}{\partial}
\newcommand{\R}{\mathbb{R}}
\newcommand{\C}{\mathbb{C}}
\newcommand{\bd}{\mathbf d}
\newtheorem{thm}{Theorem}[section]
\newtheorem{prop}{Proposition}[section]
\title{\boldmath Jacobi sigma models}
\author[a,b]{F. Bascone}
\author[a]{Franco Pezzella}
\author[a,b]{Patrizia Vitale}
\affiliation[a]{INFN - Sezione di Napoli, \\ Complesso Universitario di Monte S. Angelo Edificio 6, via Cintia, 80126 Napoli, Italy}
\affiliation[b]{Dipartimento di Fisica ``E. Pancini'', Universit\`a di Napoli Federico II, \\ Complesso Universitario di Monte S. Angelo Edificio 6, via Cintia, 80126 Napoli, Italy}
\emailAdd{francesco.bascone@na.infn.it}
\emailAdd{franco.pezzella@na.infn.it}
\emailAdd{patrizia.vitale@na.infn.it}
\abstract{We introduce a two-dimensional sigma model associated with a Jacobi manifold. The model is a generalisation of a Poisson sigma model providing a topological open string theory. In the Hamiltonian approach first class constraints are derived, which generate gauge invariance of the model under diffeomorphisms. The reduced phase space is finite-dimensional.  By introducing a metric tensor on the target, a non-topological sigma model is obtained, yielding a Polyakov action with metric and $B$-field, whose target space is a Jacobi manifold. }
\keywords {Sigma Models, Topological Strings}
\begin{document} 
\maketitle
\flushbottom
\section{Introduction}
\label{intro}

Jacobi sigma models are here introduced as a natural generalisation of Poisson sigma models. The latter, first introduced in the context of two-dimensional gravity \cite{Ikeda1994, Schaller1994}, have been widely investigated in relation with symplectic groupoids, BF theory, branes and deformation quantisation \cite{Cattaneo2001, Cattaneo2001a, Cattaneo2000, Cattaneo2001b, Bonechi2005, Ikeda2017, Falceto2010, Calvo2006, Calvo2005, Cattaneo2013}.  They were also analysed from the point of view of holography and noncommutative geometry in \cite{Vassilevich2013}. In two dimensions these are topological field theories on a Riemannian surface $(\Sigma, g)$, with target space a Poisson manifold, $(M, \Pi)$  and a first order action 
\be \label{sigmaP}
S= \int_\Sigma  \left[ \eta_i\wedge dX^i+\frac{1}{2}\Pi^{ij}\eta_i\wedge \eta_j \right]
\ee
where $X$ is a smooth map, $\eta, \; dX$ are one-forms on $\Sigma$ with values in the pull-back of the cotagent  and tangent bundle respectively, 
%bundle maps $(X,\eta): T\Sigma \rightarrow T^*M$, 
$ X:\Sigma \rightarrow M, \;\;\, \eta\in \Omega^1 (\Sigma, X^* (T^*M))$, 
$dX\in\Omega^1(\Sigma,  X^*(TM))$ and $\Pi$ is a Poisson structure on $M$, namely a skew-symmetric bi-vector field satisfying Jacobi identity. When the latter is invertible, it is possible to eliminate the auxiliary field $\eta$,   and obtain a    formulation with target the  tangent space $TM$. This yields a topological sigma model, the so-called A-model  \cite{Witten1988, Witten1998},  with only a $B$-field term, being $B=\Pi^{-1}$ and $dB=0$. %More details about the model are going to be summarised  in next section. 
%Let's just recall  that, 
Because of the properties of the Poisson bi-vector field $\Pi$, 
a number of interesting facts are proven in the literature. First, under suitable assumptions, the constrained manifold, $\mathcal{C}$,   quotiented with respect to symmetries, is a finite dimensional symplectic groupoid, generalising a well known result which holds for $M=\mathfrak{g}^*$, the dual of a given  Lie algebra, where  $\mathcal{C}$/Sym is found to be diffeomorphic to $ T^*G$ \cite{Cattaneo2001}.  Second, the path integral quantisation of the model furnishes a field-theoretical proof of Kontsevich star product quantisation of Poisson manifolds \cite{Cattaneo2001a, Cattaneo2000}. Moreover, the model is gauge  invariant under space-time diffeomorphisms and the algebra of gauge parameters closes under Koszul bracket \cite{Cattaneo2001}. Finally, if the Lagrangian in \eqn{sigmaP} is complemented with a dynamical term $\frac{1}{2} G^{ij} \eta_i\wedge \star \eta_j $, with $G$ a metric tensor on $M$, by integrating away the auxiliary field $\eta$ it is possible to retrieve the full  Polyakov string action (see for example \cite{Schupp2012}). It is also possible to twist the Poisson structure by generalising the Poisson sigma model with the introduction of a Wess-Zumino term \cite{Klimcik2002}.

A natural question for us is then, whether it is possible  to relax the condition that $\Pi$ be Poisson, namely $[\Pi,\Pi]_S=0$\footnote{$[\;,\;]_S$ is the Schouten-Nijenhuis bracket.}. An almost obvious generalisation, although not considered insofar in the literature\footnote{While being in the process of submitting  the manuscript we have been aware of a new submission on the archives \cite{Chatzistavrakidis2020} where  similar ideas  are explored. } is to consider a Jacobi structure,  $(M, \Pi, E)$, with $\Pi$ a bi-vector field and $E\in \mathfrak{X}(M)$ a vector field on $M$ such that
\be
[\Pi,\Pi]_S= 2E\wedge\Pi\;\;\;\, {\rm and} \;\;\;\,[E,\Pi]_S=0.
\ee
The goal is thus to  build and study a two-dimensional sigma model with target space a Jacobi manifold. To this, we start from the observation, proven in \cite{Lichnerowicz1978}, that   a Jacobi structure on $M$ always gives rise to a Poisson structure on $M\times \R$, say $P$, with the help of a kind of  dilation vector field. Hence, a Poisson sigma model may  be  defined on $(M\times \R, P)$ whose dynamics  may  be reduced by means of a projection to the Jacobi manifold, $M\times \R\stackrel{\pi}{\rightarrow} M$. Thus we show that the projected dynamics can be obtained directly from an action functional on the Jacobi manifold, solely in terms of its defining structures. Such a model is shown to be topological, with first class constraints and gauge invariant under diffeomorphisms. A main result of the paper is Theorem \ref{mainthm}, which proves that, similarly to Poisson sigma models, the quotient manifold $\mathcal{C}/\rm{Diff}(\Sigma) $, with  $\mathcal{C}$ the constrained manifold, is finite dimensional, though  with dimension equal to $2 \, \rm{dim} M-2$. 

The paper is organised as follows. In Section \ref{secpoissonsigma} we shortly review the Poisson sigma model, mainly following notations and conventions of \cite{Cattaneo2001}. In Section \ref{secjacobisigma} Jacobi brackets and Jacobi manifolds are introduced and a Poisson sigma model on the extended manifold $M\times \R$ is defined. An action on the Jacobi manifold is thus proposed, which reproduces the projected dynamics. The model exhibits first class constraints, which generate gauge transformations, {but also second class ones, which shall be taken into account}. On using a consistent definition of Hamiltonian vector fields for Jacobi manifolds (see for example \cite{Vaisman2002, Asorey2017}), we show that the latter can be associated with gauge transformations and verify that they close under Lie bracket, generating space-time diffeomorphisms. The model results to be topological, with a finite number of degrees of freedom, on the boundary. In section \ref{secfulljacobi} we investigate the possibility of introducing a metric term, in analogy with what is done for   Poisson sigma models, so to obtain a model which is non-topological. We manage to integrate out the auxiliary fields and obtain a Polyakov action, with metric $g$ and $B$-field determined in terms of the defining structures of the Jacobi bracket, $(\Lambda, E)$. 

In order to better understand the novelties and peculiarities of the model, we build in Section \ref{secExamples} a noteworthy  example with the group manifold of $SU(2)$ as target space. We conclude with final remarks and perspectives.

\section{Poisson sigma models}
\label{secpoissonsigma}

Let $\left(M, \Pi \right)$ be a Poisson manifold, where $\Pi \in \Gamma(\wedge^2 TM)$ is a Poisson structure on the smooth $m$-dimensional manifold $M$, and $\Sigma$  a $2$-dimensional orientable smooth manifold, eventually with boundary. The topological Poisson sigma model is defined by the fields $(X, \eta)$, with $ X:\Sigma \rightarrow M$ and $ \eta\in \Omega^1 (\Sigma,  X^*(T^*M))$ a one-form on $\Sigma$ with values in the pull-back of the cotangent bundle over $M$. {$PM$ shall indicate the configuration space of the model, namely the space of smooth maps $X: \Sigma \rightarrow M$.}
%To be rigorous, $\eta$ should be further pulled-back to $\Sigma$ by means of the pull-back map $X^*$, which we omit from now in order not to burden the notation. 
%$X: \Sigma \to M$ and $\eta(u) \in \Omega^1(\Sigma, T^*_{X(u)}M)$, $u \in \Sigma$ 
The embedding of $\Sigma$ in $M$ is thus realised  by the fields $X$, while  $\eta$ will be associated with conjugate momenta, as we shall see below. The action functional, represented by Eq. \eqn{sigmaP}, yields the following 
%\begin{equation}\label{sigmaP}
%S(X, \eta)=\int_{\Sigma} \left[\eta_i \wedge dX^i+\frac{1}{2} \Pi^{ij}(X) \eta_i \wedge \eta_j \right]
%\end{equation}
 equations of motion (e.o.m.):
\begin{equation}\label{eqmopoisson1}
dX^i+\Pi^{ij}(X)\eta_j=0,
\end{equation}
\begin{equation}\label{eqmopoisson2}
d\eta_i+\frac{1}{2}\partial_i \Pi^{jk}\eta_j \wedge \eta_k=0.
\end{equation}
Consistency of the e.o.m. requires that $\Pi$ satisfies $[\Pi, \Pi]_S=0$. $[\;,\;]_S$ is the Schouten-Nijenhuis bracket,
%which generalises the Lie bracket of vector fields to multi-vector fields. 
namely a skew-symmetric bilinear map $\Lambda^p(M) \times \Lambda^q(M) \to \Lambda^{p+q-1}(M)$ given by
\be
\left[A_1 \wedge \dots \wedge A_p , \,  B_1 \wedge \dots \wedge B_q\right]_S=\sum(-1)^{t+s} A_{1} \wedge \dots \widehat{A}_{s} \dots \wedge A_{p} \wedge\left[A_{s}, B_{t}\right] \wedge B_{1} \wedge \dots \widehat{B}_{t \cdots} \wedge B_{q}
\ee
where $A_1, ..., A_p, B_1, ..., B_q$ are vector fields over $M$ and $\widehat{A}$ indicates the omission of the vector field $A$. Explicitly, we have
\be
0=[ \Pi, \Pi]^{ijk}_S= \Pi^{i\ell}\del_\ell\Pi^{jk} + {\rm cycl}(ijk)
\ee
reproducing  the Jacobi identity, which holds true for a Poisson structure. 
%The bracket $[\cdot, \cdot]_S$ is the Schouten bracket on the algebra of multivectors on $M$. \textcolor{red}{aggiungere definizione Schouten}

Note that if the worldsheet $\Sigma$ has a boundary, the boundary conditions $\eta(u)v=0 \, \, \forall \, v \in T(\partial \Sigma)$, with $u \in \partial \Sigma$, are chosen.

The  sigma model action \eqn{sigmaP} contains a number of different interesting models. For example, the most natural one corresponds to the choice  $\Pi^{ij}=0$, in which case one has simply an Abelian BF theory with action $\int_{\Sigma} d^2u \, \epsilon^{\mu\nu} \eta_{\mu i} \partial_{\nu}X^i$, while an interesting nontrivial case has a linear  Poisson structure on $M$,  $\Pi^{ij}={f^{ij}}_k X^k$. The latter  leads to a non-Abelian BF theory with action $S=\int_{\Sigma} d^{2}u \, \left(\epsilon^{\mu \nu} \eta_{\mu i} \partial_{\nu} X^{i}+ \frac{1}{2} \epsilon^{\mu \nu} {f^{i j}}_k X^{k} \eta_{\mu i} \eta_{\nu j}\right)$. In fact, in this case the Jacobi identity for $\Pi$ becomes a Jacobi identity for the structure constants of a Lie algebra ${f^{ij}}_k$. Another special case is the one with non-degenerate Poisson structure, which can be inverted to a symplectic form $\omega$ (which plays the role of   $B$-field in the language of strings), leading to the so-called A-model, with action %$S=\int_{\Sigma} X^* \omega$
$S=\int \omega_{ij} dX^i\wedge dX^j$. It is also possible to show that $2$-dimensional Yang-Mills, $R^2$-gravity theories and gauged WZW models can be obtained \cite{Ikeda2017,Schallera}.

We will now focus on the Hamiltonian approach. Let us choose locally a time coordinate $u^0=t$ and denote with $u^1=u$ the space coordinate, which   can be taken to belong to a closed interval, $u \in [0,1]$, if one wants to   describe open strings. By denoting  $\beta_i = \eta_{0i}$, $\zeta_i = \eta_{1i}$ and   $\dot{X}=\partial_t X$,  $X'=\partial_u X$, the first order Lagrangian can be written as
\begin{equation}\label{lagpo}
L(X, \zeta; \beta)=\int_I du \left[-\zeta_i \dot{X}^i+\beta_i\left(X'^i+\Pi^{ij}(X)\zeta_j \right) \right],
\end{equation}
from which it is clear that $X$ and ${-\zeta}$ are canonically conjugate variables, with Poisson brackets $\{\zeta_i(u), X^j(v)\}={-{\delta_i}^j \delta(u-v)}$ and all  other brackets vanishing. Given the explicit expression of the Lagrangian, we
notice that the action is invariant under the exchange $\dot{X} \leftrightarrow X'$ and $ \beta \leftrightarrow - \zeta$.

 Since $\beta$ has no conjugate variable, it has to be understood as a Lagrange multiplier imposing the constraints
\begin{equation}\label{constraintpoissonsigma}
\quad X'^i+\Pi^{ij}(X)\zeta_j=0.
\end{equation}
Therefore,  the Hamiltonian 
\begin{equation}
H_{\beta}={-}\int_I du \,\beta_i \left[X'^i+\Pi^{ij}(X)\zeta_j \right],
\end{equation}
is a pure constraint and the space of solutions, say  $\mathcal{C}$, can be equivalently defined as the set of common zeroes of $H_{\beta}$. It is also possible to prove \cite{Cattaneo2001} that these constraints are first class, namely  they satisfy the following relations, provided that $\beta, \beta'$  vanish on the boundary:
\be
\{ H_{\beta}, H_{\beta'}\}= H_{[\beta, \beta']} 
\ee
with 
\be[\beta, \beta']= d \langle \beta, \Pi(\beta') \rangle-\iota_{\Pi(\beta)} d\beta' + \iota_{\Pi(\beta')} d\beta
\ee
being the Koszul bracket of one-forms, which closes thanks to the Jacobi identity of $\Pi$.  Here $\langle \cdot, \cdot \rangle$  denotes the natural pairing between vectors and one-forms at a point in $M$. 
%Here $\beta$ has to be considered vanishing on the boundary since the $\eta_{0i}$ component is tangent to the boundary.
Being the Hamiltonian of the model a pure constraint, the system is invariant under time-diffeomorphisms. The infinitesimal generators are  the  Hamiltonian vector fields associated with $H_{\beta}$, 
\be 
\xi_{\beta}= \{H_\beta, \cdot \}= \dot X^i \frac{\del}{\del X^i} + \dot \zeta_i \frac{\del}{\del \zeta_i}
\ee
 where $(\dot X^i, \dot  \zeta_i) $ can be read from Eqs. \eqn{eqmopoisson1}-\eqn{eqmopoisson2} as:
\begin{equation}\label{sympoisson1}
 \dot X^i=-\Pi^{ij}\beta_j,
\end{equation}
\begin{equation}\label{sympoisson2}
\dot  \zeta_i=\partial_u \beta_i-\partial_i \Pi^{jk}\zeta_j \beta_k \,. 
\end{equation}
Moreover, by indicating with $f(u) \del_u$  a generic space diffeomorphism, it is immediate to check that this is the generator of an infinitesimal symmetry for the model, it being the Hamiltonian vector field associated with $H_\beta$, for   $\beta_\ell= f(u) \zeta_\ell$.   This is a direct consequence of the invariance of the action under the exchange  $X'\leftrightarrow \dot X$ and $\beta\leftrightarrow - \zeta$. Thus, the model is invariant under space-time diffeomorphisms and  the  reduced phase space can be defined as $\mathcal{G}=\mathcal{C}/\text{Diff}(\Sigma)$. It can be proven \cite{Cattaneo2001, Cattaneo2001a} that the latter is a finite-dimensional, closed  subspace of phase space, of dimension 2dim$(M)$, with a natural groupoid structure. Under certain conditions this is a symplectic groupoid integrating the Lie algebroid associated with the Poisson manifold \cite{Levin2000}. 

Finally, the absence of an Hamiltonian implies that there is no dynamics and the model is topological in the bulk.

\section{Jacobi sigma models}
\label{secjacobisigma}

In order to formulate a consistent sigma model with target configuration space a  Jacobi manifold $M$, we  first briefly review the main definitions of Jacobi brackets and Jacobi manifold. [See for example \cite{Vaisman2002,Marle1991,Grabowski2001,Crainic2007, Kirillov1976} and refs therein. Also see \cite{LV} for a generalization in terms of  Jacobi structures on complex line bundles]. Hence we will build a Poisson sigma model on the extended Poisson manifold $M \times \R$, according to a  {\it Poissonization procedure} of the Jacobi structure. We will thus project the obtained dynamics on the underlying Jacobi manifold and finally propose a consistent model, directly defined on the Jacobi manifold, whose dynamics is proven to coincide with the projected one. 
%The latter concept is related to the possibility of obtaining Poisson structures from Jacobi structures by adding one more real dimension to the Jacobi manifold. This is usually referred to as Poissonization.

\subsection{Jacobi brackets and Jacobi manifold}\label{Jacobiman}

Jacobi brackets are defined by means of a bi-differential operator acting on the algebra of functions  on a smooth manifold $M$, as
\begin{equation}
\{f, g\}_J=\Lambda(d f, d g)+f(E g)-g(E f),
\end{equation}
where $\Lambda$ is a bivector field and $E$ is a vector field (called Reeb vector field) on the manifold $M$,  satisfying
\begin{equation}\label{jacobilambda}
[\Lambda, \Lambda]_S=2 E \wedge \Lambda, \quad [\Lambda, E]_S=\mathscr{L}_{E}\Lambda=0.
\end{equation}
For later convenience, we also report their explicit expression in coordinates:
\begin{equation}\label{jacobiidentitygen}
{\Lambda^{pi} \partial_p \Lambda^{jk}+{\rm cycl\, perm}\{ijk\}%\Lambda^{pj} \partial_p \Lambda^{ki}+\Lambda^{pk} \partial_p \Lambda^{ij}
=E^i \Lambda^{jk} +{\rm cycl\, perm}\{ijk\}},
\end{equation}
\begin{equation}\label{liederivelambda}
E^k \partial_k \Lambda^{ij}-\Lambda^{kj} \partial_k E^i-\Lambda^{ik} \partial_k E^j=0.
\end{equation}
Jacobi brackets are skew-symmetric and satisfy Jacobi identity just like Poisson brackets, but in general a Jacobi structure does not satisfy Leibniz rule, which is instead replaced by the condition
\begin{equation}
\{f, g h\}_J=\{f, g\}_J h+g\{f, h\}_J+g h(E f).
\end{equation}
In other words, the Jacobi bracket endows the algebra of functions $\mathcal{F}(M)$ with the structure of a Lie algebra, but, unlike the Poisson bracket,  it is not a derivation of the point-wise product among functions.
Clearly,  Jacobi brackets are  a generalisation of  Poisson brackets since the latter can be obtained from the former  if the Reeb vector field is vanishing, $E=0$. 

Analogously to the Poisson framework, an Hamiltonian vector field $\xi_f$ can be associated with a function $f\in \mathcal{F}(M)$, according to the following definition (see for example \cite{Vaisman2002}):
\begin{equation}\label{Hamvec}
\xi_f=\Lambda(df, \cdot)+f E.
\end{equation}
The map $f\rightarrow \xi_f$ is homomorphism of  Lie algebras, it being  $[\xi_f, \xi_g]=\xi_{\{f,g\}_J}$, where the bracket $[\cdot, \cdot]$ is the standard Lie bracket of vector fields.

Examples of Jacobi manifolds are locally conformal symplectic manifolds and contact manifolds. The former ones are even-dimensional manifolds endowed with a two-form $\omega$ and  an open covering of charts $\{U_{i}\}$ such that locally the restriction $\omega_{|U_i}=e^{a_i}\Omega_i$, with $\Omega_i$ symplectic form on the chart $U_i$ and $a_i$ smooth functions on the local chart. Locally, they have then a Poisson structure $\{ ,\}_i$ but globally $e^{-a_i}\{ e^{a_i}f, e^{a_i} g\}$ is a Jacobi bracket. More explicitly \cite{Marle1991}, %and being this more useful for applications, 
one can define a locally conformal symplectic manifold by a pair $(\omega, \alpha)$, with $\omega$ a two-form with rank equal to the dimension of the manifold and $\alpha$ a one-form, such that 
\begin{equation}
d\alpha=0, \quad d\omega+\alpha \wedge \omega=0.
\end{equation}
The  Jacobi structure ($\Lambda$, $E$) is thus defined as the unique bi-vector field and the unique vector field which satisfy:
\begin{equation}
\iota_E \omega=-\alpha, \quad \iota_{\Lambda(\gamma)}\omega=-\gamma \quad \forall \, \gamma \in T^*M.
\end{equation}

Contact manifolds are instead odd-dimensional manifolds which are endowed with a  contact form (or contact structure), i.e. a one-form satisfying $\vartheta \wedge (d\vartheta)^n \neq 0$ everywhere, where $2n+1$ is the dimension of the manifold. This means that a one-form $\vartheta$ is a contact structure on a odd-dimensional manifold if $\vartheta \wedge (d\vartheta)^n$ is a volume form. Contact forms are defined up to multiplication by a non-vanishing function. It is possible to endow the algebra of functions on a contact manifold with  a Lie algebra structure \cite{Asorey2017}, which reads
\begin{equation}\label{con1}
[f, g ]\vartheta \wedge(d \vartheta)^{n} : =(n-1) d f \wedge d g \wedge \vartheta \wedge(d \vartheta)^{n-1}+(f d g-g d f) \wedge(d \vartheta)^{n}.
\end{equation}
The latter is local by construction and satisfies Jacobi identity. It is possible to show that this is actually a Jacobi bracket by defining $\Lambda$ and $E$ as follows:
\begin{equation}\label{contactcond}
\begin{aligned}
{} & \iota_E \left( \vartheta \wedge(d \vartheta)^{n}\right)=(d \vartheta)^{n} \\ &
\iota_{\Lambda} \left(\vartheta \wedge(d \vartheta)^{n}\right)=n \vartheta \wedge(d \vartheta)^{n-1}.
\end{aligned}
\end{equation}
The latter trivially imply that
\begin{equation}\label{contactcond2}
\iota_E \vartheta=1, \quad \iota_{E} d\vartheta=0,
\end{equation}
as well as 
\begin{equation}\label{contactcond3}
\iota_{\Lambda} \vartheta=0, \quad  \iota_{\Lambda} d\vartheta=1.
\end{equation}
An interesting property of a contact manifold is that its Poissonization is actually a Symplectification, as will be further commented  in the next section.

Interesting examples of contact manifolds are three-dimensional semi-simple Lie groups, where one of the basis left- or right-invariant one-forms can be chosen  as a contact structure. Especially interesting to us is the group $SU(2)$, whose associated sigma models have been widely studied.  Besides being simple and fairly well behaved in many respects, $SU(2)$ is the prototypical example of a Poisson-Lie group. It has been investigated in relation with Poisson sigma models in  \cite{Bonechi2005, Calvo2003}.  Moreover, Poisson-Lie duality of the $SU(2)$ Principal Chiral model, with and without Wess-Zumino term, has been considered by the authors in  \cite{Marotta2019, Bascone2020,Bascone2020a}. Therefore, we are interested in the possibility of generalising previous results obtained in \cite{Marotta2019, Bascone2020,Bascone2020a} to Jacobi sigma models on $SU(2)$ and we will exhibit a preliminary analysis in Section \ref{secExamples}.

\subsubsection{Homogeneous Poisson structure on $M \times \mathbb{R}$ from Jacobi structure}

The starting point for the subsequent analysis is provided by  the following theorem \cite{Lichnerowicz1978}:
\begin{thm}\emph{
$J(f,g)=\Lambda(df, dg)+f(E g)-g(E f)$ defines a Jacobi structure on the manifold $M$ iff the bivector $P$ defined as
\begin{equation}\label{poissonizat}
P \equiv \frac{1}{t} \Lambda+\frac{\partial}{\partial t} \wedge E, \;\;\;\; t\in \R_+
\end{equation}
is a Poisson structure on $M \times \mathbb{R}_+$.}
\end{thm}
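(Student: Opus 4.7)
The plan is to compute the Schouten bracket $[P,P]_S$ directly in coordinates and show that its vanishing is equivalent, term by term, to the two Jacobi relations in \eqref{jacobilambda}. Working in adapted coordinates $(x^i,t)$ on $M\times\mathbb{R}_+$, where
\[
P^{ij}=\tfrac{1}{t}\Lambda^{ij}(x),\qquad P^{it}=E^i(x),\qquad P^{tt}=0,
\]
I would evaluate the component identity $[P,P]_S^{ABC}=P^{AL}\partial_L P^{BC}+\mathrm{cycl}(ABC)=0$ case by case, according to how many of the free indices $A,B,C$ equal $t$. The key simplification is that $\Lambda$ and $E$ are $t$-independent, so $\partial_t$ acts nontrivially only on the explicit factor $1/t$, producing $-1/t^2$.

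When all three indices are spatial, say $(i,j,k)$, the spatial contractions yield $t^{-2}(\Lambda^{i\ell}\partial_\ell\Lambda^{jk}+\mathrm{cycl})$, while the contractions with $L=t$ give $E^i\partial_t(t^{-1}\Lambda^{jk})+\mathrm{cycl}=-t^{-2}(E^i\Lambda^{jk}+\mathrm{cycl})$. Setting the sum to zero reproduces \eqref{jacobiidentitygen} exactly, namely $[\Lambda,\Lambda]_S=2E\wedge\Lambda$. When two indices are spatial and one equals $t$, the $P^{tt}$ contractions drop out, the $\partial_t$ pieces vanish on $E$ and $\Lambda$, and the surviving terms assemble, after factoring out $t^{-1}$, into $\Lambda^{i\ell}\partial_\ell E^j-\Lambda^{j\ell}\partial_\ell E^i-E^\ell\partial_\ell\Lambda^{ij}$, which (after using antisymmetry of $\Lambda$) is exactly $-(\mathscr{L}_E\Lambda)^{ij}$; this vanishes iff \eqref{liederivelambda} holds, i.e.\ $[\Lambda,E]_S=0$. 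Cases with two or more $t$ indices are automatically zero thanks to $P^{tt}=0$ and the antisymmetry of $P$.

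Because the two tensorial structures $\Lambda\,\partial\Lambda$ and $\Lambda\,\partial E$ appear in different sectors of $[P,P]_S$ and cannot compensate each other, both Jacobi conditions must hold separately whenever $[P,P]_S=0$, and conversely their joint validity kills $[P,P]_S$ sector by sector; this establishes both directions of the biconditional at once. The only real difficulty is bookkeeping: tracking signs and combinatorial factors in the coordinate expression of the Schouten bracket, and in particular verifying that the cross terms between $t^{-1}\Lambda$ and $\partial_t\wedge E$ are precisely what converts the naive Poisson identity for $\Lambda$ into its twist by $E\wedge\Lambda$. A coordinate-free alternative would use the graded Leibniz rule for $[\cdot,\cdot]_S$ together with $[\partial_t,\Lambda]_S=[\partial_t,E]_S=0$ to expand $[P,P]_S$ into four summands, but the index calculation above makes the matching with \eqref{jacobilambda} most transparent.
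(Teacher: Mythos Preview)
The paper does not actually supply a proof of this theorem: it is quoted as a classical result of Lichnerowicz \cite{Lichnerowicz1978} and used as input for the subsequent construction. So there is no ``paper's own proof'' to compare against; your direct coordinate computation of $[P,P]_S$ is precisely the standard way one establishes the result, and the sector-by-sector analysis you outline is correct in structure and yields the claimed equivalence.

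One small caveat on bookkeeping: with $P=\tfrac{1}{t}\Lambda+\partial_t\wedge E$ and the usual antisymmetric convention $P=P^{AB}\partial_A\otimes\partial_B$, one has $P^{ti}=E^i$ and hence $P^{it}=-E^i$, the opposite sign from what you wrote. This does not affect the logic of your argument---the $(i,j,k)$ sector still collapses to \eqref{jacobiidentitygen} and the $(i,j,t)$ sector to \eqref{liederivelambda}---but it does flip an overall sign in the intermediate expression, so when you carry out the computation in full you should track that sign carefully to be sure the match with \eqref{jacobiidentitygen} (which is written with $\Lambda^{pi}$ rather than $\Lambda^{ip}$ on the left) comes out on the nose. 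Otherwise the plan is sound and complete.
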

Such a Poisson structure may be seen to be homogeneous, namely,  it is easy to show that $P$ in (\ref{poissonizat}) satisfies  $\mathscr{L}_Z P=-P$, with  $Z=t \frac{\partial}{\partial t}$,   the first term in (\ref{poissonizat}) being  homogeneous of degree $-1$ with respect to $t$.

On performing the change of variables $t=e^{\tau}$, the Poisson structure gets defined on $M \times \mathbb{R}$ as follows:
\begin{equation}
P=e^{-\tau}\left(\Lambda+\frac{\partial}{\partial \tau} \wedge E\right),
\end{equation}
%with $Z=\frac{\partial}{\partial \tau}$, 
with $\tau\in \mathbb{R}$.  This redefinition will be particularly useful for simplifying forthcoming computations. We will also consider the immersion $j: M\hookrightarrow M\times \R$ through the identification  of $M$ with $M \times \{0\}$.

The association of a Poisson structure on an extended manifold with a Jacobi structure on the original manifold  is usually referred to as Poissonization.

As it was already mentioned in the previous section, an interesting property of a contact manifold is that its Poissonization is actually a symplectic manifold, hence one could refer to it as a symplectification. Indeed, if $M$ is a contact manifold, one can define a   closed $2$-form $\omega$ on $M \times \mathbb{R}$ by using the contact form $\theta$: $\omega=d\left( e^{\tau} \pi^* \theta\right)=e^{\tau}\left(d\tau \wedge \pi^* \theta+d\pi^*\theta \right)$, where $\pi: M \times \mathbb{R} \to M$ is the projection map. %This is obviously closed, and since $\tau \in \mathbb{R}$ is a zero-form, $\omega$ is a well-defined two-form. 
Because of the properties of $\theta$, it is possible to prove that $\omega$ is also non-degenerate, so it is a legitimate symplectic form and makes $\left(M \times \mathbb{R} , \omega\right)$ into a symplectic manifold.

\subsection{Poisson sigma model on $M\times \mathbb{R}$}

Let us consider an $m$-dimensional Jacobi manifold $(M, \Lambda, E)$ and a Poisson sigma model having the Poisson manifold $\left(M\times \mathbb{R}, P \right)$ as target space, with Poisson structure $P=e^{-X_0}\left(\Lambda+\frac{\partial}{\partial X_0} \wedge E\right)$. The field configurations in this case are maps $X^{I}=(X^i, X^0) : \Sigma \to M \times \mathbb{R}$ and $\eta \in \Omega^1(\Sigma, X^*(T^*(M \times  \mathbb{R})))$, with $\eta_{I}=(\eta_i, \eta_0)$, where capital indices $I, J=0, \cdots m$  are related to the Poisson manifold $M \times \mathbb{R}$, 
while $i,j=1, \cdots m$  are related to the Jacobi manifold $M$.  The Poisson bi-vector field can be written explicitly in a coordinate basis as
\begin{equation}
P^{I J} = e^{-X_0}\begin{pmatrix}
& & & & -E^1\\
& & & & \\
&  & \Lambda^{ij} & &\vdots \\
& & & & \\
& & & & -E^m\\
E^1 & & \cdots &  E^m & 0
\end{pmatrix},
\end{equation}
with $P=P^{I J}\partial_{I} \wedge \partial_{J}$ and $E=E^i \partial_i$ (note that the Reeb vector field has only non-zero components  on $M$).

By splitting the equations of motion, (\ref{eqmopoisson1}) and (\ref{eqmopoisson2}) in terms of target coordinates adapted to the product manifold, one obtains:
\begin{equation}\label{eqmodecomp1}
dX^i+e^{-X^0}\left(\Lambda^{ij}\eta_j-E^i \eta_0 \right) =0,
\end{equation}
\begin{equation}\label{eqmodecomp2}
dX^0+e^{-X^0} E^i \eta_i=0,
\end{equation}
\begin{equation}\label{eqmodecomp3}
d\eta_i+\frac{1}{2} e^{-X^0} \partial_i \Lambda^{jk}\eta_j \wedge \eta_k+e^{-X^0}\partial_i E^j \eta_0 \wedge \eta_j=0,
\end{equation}
\begin{equation}\label{eqmodecomp4}
d\eta_0-\frac{1}{2} e^{-X^0} \Lambda^{jk}\eta_j \wedge \eta_k-e^{-X^0} E^j \eta_0 \wedge \eta_j=0.
\end{equation}
Let us now project the dynamics to $M$ via projection map $\pi: M \times \mathbb{R} \to M$, namely by considering $X^0= const$.
We find (by choosing $X^0=0$ for simplicity)
\begin{equation}\label{eqmopoissonization}
\begin{aligned}
{} & dX^i+\Lambda^{ij}\eta_j-E^i \eta_0=0, \\ &
E^i \eta_i=0, \\ &
d\eta_i+\frac{1}{2}\partial_i \Lambda^{jk}\eta_j \wedge \eta_k+\partial_i E^j \eta_0 \wedge \eta_j=0, \\ &
d\eta_0-\frac{1}{2}\Lambda^{jk}\eta_j \wedge \eta_k=0.
\end{aligned}
\end{equation}
where the second equation, $E^i \eta_i=0$, is purely algebraic, i.e. it is a constraint.

In  next section we will show that it is possible to derive the projected dynamics \eqn{eqmopoissonization} from  an action principle, directly defined on the Jacobi manifold, in a consistent manner. We will thus analyse the space of solutions, the algebra of constraints and the gauge invariance of the model. 

\subsection{Action principle on the Jacobi manifold}
Let $\left(M, \Lambda, E \right)$ be a Jacobi manifold, with $\Lambda \in \Gamma(\wedge^2 TM)$ and $E \in \Gamma(TM)$ satisfying  Eqs. \eqn{jacobilambda}.
%\begin{equation}
%[\Lambda, \Lambda]_S=2E \wedge \Lambda, \quad [\Lambda, E]_S=\mathscr{L}_{E}\Lambda=0.
%\end{equation}
We introduce the field configurations $(\phi, \eta, \lambda)$, with $\phi: \Sigma \rightarrow M$ a smooth map and $(\eta, \lambda ) \in \Omega^1(\Sigma,  \phi^*(T^*M\oplus \R))$, %$\lambda \in \Omega^1(\Sigma,\phi^*(C^\infty(G)))$
 with $T^*M\oplus \R=J^1 M$, the  vector bundle of 1-jets of real functions on $M$. Sections of the latter are isomorphic to one-forms  of the kind $e^\tau(\alpha+ f d\tau)$ \cite{Vaismanapm}, with $\alpha\in \Omega^1(M), f\in  C^\infty(M)$, $\tau$ a real parameter,  which are in turn a subalgebra of $\Omega^1(M\times \R)$.  In local coordinates   $(t,u)\in \Sigma$ we shall pose     $\phi(t,u) =X$.

 The map $\langle\;,\;\rangle$ shall indicate a pairing  between  differential forms on $\Sigma$ with values in the pull-back  $\phi^*(T^*M)$ and differential forms on $\Sigma$ with values in $\phi^*(TM)$. It is induced   by  the natural one between  $T^*M$ and $TM$ and yields  a two-form on $\Sigma$. 

\begin{prop}\label{mainprop}\emph{
The action functional 
\be\label{jacobiaction}
S(\phi, \eta, \lambda)=
 \int_\Sigma \langle\eta, \phi^*(\bd  X)\rangle + \frac{1}{2}\langle\eta,(\Lambda\circ \phi)\eta\rangle + \lambda \wedge (E\circ \phi)\eta
\ee
 with  boundary condition $\eta(u)v=0, u \in \del \Sigma, v\in T(\del \Sigma)$, defines a sigma model on the Jacobi manifold $M$, whose dynamics reproduces Eqs. \eqn{eqmopoissonization}.}
\end{prop}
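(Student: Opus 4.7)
The plan is to write the coordinate-free action \eqn{jacobiaction} in local coordinates on $\Sigma$ with $\phi(t,u)=X^i$, so that it becomes
\[
S = \int_\Sigma \eta_i \wedge dX^i + \tfrac{1}{2}\Lambda^{ij}(X)\, \eta_i \wedge \eta_j + \lambda \wedge E^i(X)\, \eta_i,
\]
and then take independent variations with respect to $X^i$, $\eta_i$, and $\lambda$. The key observation is that, under the identification $\lambda \leftrightarrow \eta_0$ suggested by the parametrisation $e^\tau(\alpha + f\, d\tau)$ of sections of $J^1 M$ recalled just above, the resulting Euler--Lagrange equations should match the projected ones of \eqn{eqmopoissonization}.

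Since $\lambda$ appears only linearly in $\lambda \wedge E^i\eta_i$, its variation immediately returns the algebraic constraint $E^i\eta_i = 0$, reproducing the second equation. Varying $\eta_i$, one exploits the antisymmetry of $\Lambda^{ij}$ to combine the two $\delta\eta$-contributions in the $\Lambda$-term, together with the sign in $\lambda \wedge \delta\eta_i = -\delta\eta_i \wedge \lambda$, and reads off $dX^i + \Lambda^{ij}\eta_j - E^i\lambda = 0$, i.e.\ the first equation. For the variation of $X^k$, the kinetic piece $\eta_i \wedge d(\delta X^i)$ is integrated by parts with the boundary contribution killed by the prescribed $\eta|_{\partial\Sigma}(v) = 0$; combining with the $X$-derivatives of $\Lambda^{ij}$ and $E^i$ then reproduces the third equation.

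The delicate point, and the main obstacle, is the fourth equation $d\eta_0 - \tfrac{1}{2}\Lambda^{jk}\eta_j \wedge \eta_k = 0$, which has no direct Euler--Lagrange analog because $\lambda$ carries no kinetic term in the action. The plan is to recover it as an integrability condition: apply $d$ to the first equation, substitute the first and third to eliminate $dX^k$ and $d\eta_j$, and then invoke the constraint $E^i\eta_i = 0$ together with the Jacobi-structure identities \eqn{jacobiidentitygen} and \eqn{liederivelambda}; the combination $E^i\bigl[d\lambda - \tfrac{1}{2}\Lambda^{jk}\eta_j \wedge \eta_k\bigr]$ should then collapse to zero identically, which, away from zeroes of $E$, forces the desired relation. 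Conceptually, the right interpretation is that the Jacobi action coincides with the Poisson sigma-model action on $M\times\R$ restricted to $X^0=0$, so the $\delta X^0$-variation that yields the fourth equation in the Poissonised setting is automatically encoded by the constraint and the Jacobi identities.
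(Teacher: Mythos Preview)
Your proposal is correct and follows essentially the same route as the paper: rewrite the action in local coordinates, obtain the first three equations of \eqn{eqmopoissonization} as the Euler--Lagrange equations for $\eta_i$, $X^i$, and $\lambda$, and then recover the fourth as a consistency condition by applying $d$ to \eqn{eomjacobi1} and invoking the Jacobi identities. You are in fact slightly more explicit than the paper, which only cites \eqn{jacobiidentitygen} and omits \eqn{liederivelambda} (needed to kill the $\lambda\wedge\eta$ terms) and the caveat that one must divide by $E^i$; both points are genuine and your treatment is cleaner on them.
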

%The latter is built as the minimal invariant action one can build by using all the objects involved in the definition of the Jacobi manifold. 
Notice in particular the need for the auxiliary field $\lambda$ to take into account the contribution of the Reeb vector field. {Also, $\bd$ indicates the exterior derivative on the target manifold $M$. }
\begin{proof} 
Let us first rewrite the action as 
\be
S(X, \eta, \lambda)=
\int_{\Sigma} \left[\eta_i \wedge dX^i+\frac{1}{2}\Lambda^{ij}(X)\eta_i \wedge \eta_j-E^i(X) \eta_i \wedge \lambda \right]
\end{equation}
Prop. \ref{mainprop} is then proven by direct derivation of the equations of motion. It is a straightforward calculation to get
\begin{equation}\label{eomjacobi1}
dX^i+\Lambda^{ij}\eta_j-E^i \lambda=0,
\end{equation}
\begin{equation}\label{eomjacobi2}
d\eta_i+\frac{1}{2}\partial_i \Lambda^{jk}\eta_j \wedge \eta_k-\partial_i E^j \eta_j \wedge \lambda=0,
\end{equation}
\begin{equation}\label{eomjacobi3}
E^i \eta_i=0,
\end{equation}
which are exactly the first three equations  of  (\ref{eqmopoissonization}),  obtained from the reduction to $M$ of the Poisson sigma model on the extended manifold,  provided that we identify $\eta_0$ with $\lambda$.  

The forth apparently missing equation in  (\ref{eqmopoissonization})  is retrieved by  a consistency requirement.  On applying the exterior derivative to Eq. (\ref{eomjacobi1}),   by making use of  Eqs.  (\ref{eomjacobi2}), (\ref{eomjacobi3}), together with the explicit form  of the Schouten bracket  %$[\Lambda, \Lambda]=2\Lambda^{ij}\partial_j \Lambda^{k \ell} \partial_i \wedge \partial_k \wedge \partial_{\ell}$, 
\eqn{jacobiidentitygen} we find
%\begin{equation}
%\frac{1}{4} \eta_j \wedge \eta_k \left([\Lambda, \Lambda]_S\right)^{ijk}-E^i d\lambda=0.
%\end{equation}
%Using again Eqs. \eqn{jacobilambda} \textcolor{red}{
%\begin{equation}\label{eomjacobi44}
%E^i(d\lambda + \frac{1}{2}\Lambda^{lm} \eta_l \wedge \eta_m)-\Lambda^{il}E^m \eta_l\wedge\eta_m
%\end{equation}}
%is obtained. Because of the constraint 
\begin{equation}\label{eomjacobi4}
d\lambda = \frac{1}{2}\Lambda^{ij} \eta_i \wedge \eta_j.
\ee
%Note that it is also clear what is the role of $\lambda$ in terms of the Poissonisation of the Jacobi structure, namely it has the role of the $\mathbb{R}$-component of the $\eta$ $1$-form in the $M \times \mathbb{R}$ Poissonised model. 
\end{proof}

\subsubsection{Hamiltonian description, constraints and gauge transformations}

For the Hamiltonian formulation we  follow the same approach as for the Poisson sigma model.  We choose $\Sigma$ with the topology of $\mathbb{R} \times I$, with $I=[0,1]$, and pose $\beta_i=\eta_{it}$, $\zeta_i=\eta_{iu}$, $\dot{X}^i=\partial_t X^i$,  $X^{'i}=\partial_u X^i$, so that the Lagrangian of the action \eqn{jacobiaction} becomes
\begin{equation}
L(X, \zeta; \beta; \lambda)=\int_I du \left[-\dot{X}^i \zeta_i+\beta_i\left(X^{'i}+\Lambda^{ij}\zeta_j -E^i \lambda_u\right)+\lambda_t\left( E^i \zeta_i \right) \right],
\end{equation}
where $\lambda_t$ and $\lambda_u$ are the components of the  one-form $\lambda=\lambda_t dt+\lambda_u du$, with $\lambda_t,\lambda_u$ smooth functions on $\Sigma$. Explicitly, the non-trivial equations of motion read
\beqa\label{expleoms}
\dot X^i&=& -\Lambda^{ij} \beta_j+E^i \lambda_t\nonumber\\
\dot \zeta_i &=& \beta^{'}_i-\partial_i \Lambda^{jk} \beta_j \zeta_k{-\partial_i E^j \zeta_j \lambda_t+ \partial_i E^j \beta_j \lambda_u}.
\eeqa
It is evident from the Lagrangian  that   {$-\zeta_i$} is the conjugate momentum to $X^i$ with canonical  Poisson bracket{
\be\label{canb}
\{\zeta_i(u), X^j(v)\}=- {\delta_i}^j \delta(u-v)
\ee
with all  other brackets vanishing. By performing the Legendre transform with respect to the other fields, primary constraints emerge, $\pi_{\beta_i}=\pi_{\lambda_t}=\pi_{\lambda_u}=0$.   We shall  indicate with C the unconstrained phase space of maps $X^i, \beta_i, \lambda_t, \lambda_u$ and their conjugate momenta.  The Hamiltonian acquires the form 
\begin{equation}\label{hame0}
H_{\beta, \lambda}={-}\int_I du \left[\beta_i\left(X^{'i}+\Lambda^{ij}\zeta_j-E^i \lambda_u \right)+\lambda_t\left(\zeta_i E^i \right)  \right]
\ee
and conservation of constraints  imposes new ones
\beqa \label{constraints}
\mathcal{C}^i_1:&=& X^{'i}+\Lambda^{ij}\zeta_j-E^i \lambda_u=0 \label{constraint1}\\
\mathcal{C}_2:&=& \zeta_i E^i=0\label{constraint2}\\
\mathcal{C}_3:&=& \beta_i E^i=0. \label{constraint3}
\eeqa
Because of constraints $\mathcal{C}_2$ and $\mathcal{C}_3$  the last two terms in  the second equation of motion \eqn{expleoms} may be ignored, which we shall do from now on.
Analysing the whole algebra of constraints  it is possible to verify that some of them are second class, that is: $\mathcal{C}_3,  \pi_{\lambda_u}$ and the linear combinations $\beta_i \mathcal{C}^i_1$, $a^i \pi_{\beta_i}$.   Hence the number of independent first class constraints is equal to $2n$. We shall indicate the constrained phase space of maps with $\mathcal{C}={\rm C}/\sim$. }
The Hamiltonian function \eqn{hame0} is thus a combination of secondary constraints
 \begin{equation}\label{hame1}
H_{\beta, \lambda}= {-}\int_I du \, ( \beta_i \mathcal{C}_1^i + \lambda_t \mathcal{C}_2).
\end{equation}
%However, the Lagrange multipliers $\beta$ and $\lambda_t$ are not really independent: by using  Eq. \eqn{eomjacobi4}, %namely  $d\lambda=\frac{1}{2}\Lambda \eta \wedge \eta$ 
%we obtain 
%\textcolor{blue}
%{\be
%\label{lambdabeta1}
%\Lambda^{ij}(\del_i \lambda_t \zeta_j  -  \del_i \lambda_u \beta_j - \beta_i \zeta_j)+ E^i( \lambda_t\del_i \lambda_u -  \lambda_u\del_i \lambda_t)  =0. 
%\ee}
Therefore the  model is invariant under time-diffeomorphisms. 
{It is known that gauge transformations of the Poisson sigma model do not close unless one allows for the Lagrange multipliers to depend on $X$. This is also the case for the Jacobi sigma model.  To this, we extend $\lambda=\lambda (u, X(u)), \beta=\beta(u, X(u))$. Eq. \eqn{eomjacobi4} becomes then
\be
\label{lambdabeta1}
\left(E^i \lambda_u -\Lambda^{ij} \zeta_j\right) \del_i \lambda_t - \left(E^i \lambda_t -\Lambda^{ij} \beta_j\right) \del_i \lambda_u=   \Lambda^{ij} \zeta_i   \beta_j  
\ee}
The infinitesimal generators of time diffeomorphisms are obtained through the canonical Poisson bracket on the phase space of maps,  as the Hamiltonian vector fields
{\be\label{canonicaldiff}
\xi_{H_{\beta,\lambda}}= \{  {H_{\beta,\lambda}, \cdot}\}= \dot X^i \frac{\del}{\del X^i}+ \dot\zeta_i\frac{\del}{\del\zeta_i}.
\ee}
We may state the following
\begin{prop}\label{propdiff}\emph{
The canonical Hamiltonian vector field \eqn{canonicaldiff} is the cotangent lift of the Hamiltonian vector field $\xi_{\lambda_t}\in \mathfrak{X}(PM)$ associated with $\lambda_t$ through the Jacobi bracket, according to Eq. \eqn{Hamvec}
\be
\label{hamvectfieldjacobistructure}
\xi_{\lambda_t}= (\Lambda^{ij}\del_i \lambda_t + E^j\lambda_t)\frac{\del}{\del X^j}\ee
if 
\be
\label{lambdabeta}
\beta_i = \del_i \lambda_t.
\ee}
\end{prop}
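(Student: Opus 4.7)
The plan is to verify both components of the claimed equality by direct substitution of $\beta_i=\partial_i\lambda_t$ into the explicit equations of motion \eqref{expleoms} and comparison with the coordinate expression for the cotangent lift.

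The $X$-component is immediate: substituting $\beta_j=\partial_j\lambda_t$ into $\dot X^i=-\Lambda^{ij}\beta_j+E^i\lambda_t$ and using antisymmetry of $\Lambda$ gives
\[
\dot X^j = \Lambda^{ij}\partial_i\lambda_t + E^j\lambda_t = (\xi_{\lambda_t})^j,
\]
which is exactly the Hamiltonian vector field \eqref{Hamvec} associated with $\lambda_t$ through the Jacobi bracket.

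For the $\zeta$-component I would use the standard fact that the cotangent lift of a vector field $Y=Y^j\partial_j$ on $M$ acts on the conjugate momentum as $\dot\zeta_i=-\zeta_j\partial_i Y^j$, being canonically generated by the linear function $\zeta_jY^j$ with respect to the bracket \eqref{canb}. Expanding $-\zeta_j\partial_i(\xi_{\lambda_t})^j$ by the Leibniz rule produces four contributions; the one proportional to $(E^j\zeta_j)\partial_i\lambda_t$ is killed by the constraint $\mathcal{C}_2$ in \eqref{constraint2}. To match the three remaining terms against the $\dot\zeta_i$ equation of \eqref{expleoms}, I would insert $\beta_i=\partial_i\lambda_t$, expand the total derivative $\beta'_i=\partial_u(\partial_i\lambda_t)$ by the chain rule along $X(u)$, and eliminate $X'^k$ using the constraint $\mathcal{C}_1^i$ in \eqref{constraint1}. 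Two of the three cotangent-lift terms are then produced automatically (one from the explicit $\partial_i\Lambda^{jk}$-term in \eqref{expleoms}, one from the $\Lambda\zeta$-piece of $\mathcal{C}_1^i$, after an antisymmetrisation of $\Lambda$), and the $\zeta_j\partial_iE^j\lambda_t$-term is produced by the second-class piece of \eqref{expleoms} that had been discarded modulo $\mathcal{C}_2,\mathcal{C}_3$.

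The non-trivial step is then to show that the residual combination of $\lambda_u$- and explicit-$u$-derivative terms of $\lambda_t$ vanishes. This is where the consistency equation \eqref{lambdabeta1} plays its role: under $\beta=d\lambda_t$ it reduces to the vanishing of the Jacobi bracket $\{\lambda_t,\lambda_u\}_J$, and this relation together with the Lie-derivative compatibility \eqref{liederivelambda} between $\Lambda$ and $E$ provides the identity needed to close the matching. The main obstacle is precisely this final bookkeeping: separating total from fixed-$X$ $u$-derivatives, tracking signs in the cotangent-lift convention, and correctly reinstating the second-class terms of \eqref{expleoms} that feed the $\zeta$-matching once $\beta=d\lambda_t$ is imposed.
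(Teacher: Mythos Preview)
Your first paragraph --- substituting $\beta_j=\partial_j\lambda_t$ into the first of the equations of motion \eqref{expleoms} to recover $\dot X^j=(\xi_{\lambda_t})^j$ --- is exactly the paper's proof. The paper's argument stops there: it only checks that the canonical vector field \emph{projects} onto $\xi_{\lambda_t}$ on the base $PM$, and does not verify the $\zeta$-component at all.

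Everything from your second paragraph onward is therefore additional to the paper. In principle this is a reasonable thing to want, since ``cotangent lift'' is a stronger statement than ``projects onto'', but your sketch for the $\zeta$-matching is not complete and, as written, does not close. After substituting $\beta_i=\partial_i\lambda_t$ and expanding $\beta'_i$ via the chain rule and $\mathcal{C}_1^i$, the residual you must kill is
\[
\partial_u\partial_i\lambda_t \;+\; E^k\lambda_u\,\partial_k\partial_i\lambda_t \;+\; \lambda_u\,\partial_iE^j\,\partial_j\lambda_t \;=\;0,
\]
and the identity $\{\lambda_t,\lambda_u\}_J=0$ that you correctly extract from \eqref{lambdabeta1} under $\beta=d\lambda_t$ does not by itself imply this (you would need its $\partial_i$-derivative together with control of the explicit $u$-dependence of $\lambda_t$, and extra terms involving $\partial_i\lambda_u$ and $\Lambda$ appear). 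The paper sidesteps this entirely: it imposes $\lambda_u=0$ only \emph{after} the proposition as a sufficient compatibility condition, and never claims to match the $\zeta$-component. So your $X$-component argument is correct and coincides with the paper's proof; the $\zeta$-component argument is extra, incomplete as stated, and not required for what the paper actually establishes.
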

\begin{proof} 
It is easy to notice that the vector field \eqn{canonicaldiff} projects on the one defined through the Jacobi structure \eqn{hamvectfieldjacobistructure} if one uses the first of the equations of motion \eqn{expleoms} together with $\beta_i = \del_i \lambda_t$.
\end{proof}
{A sufficient condition for compatibility of Eq. \eqn{lambdabeta1} with Eq. \eqn{lambdabeta} is that $\lambda_u$ be zero, which we shall assume from now on\footnote{The general case shall be analysed in a forthcoming publication}.} 
Therefore, the Hamiltonian \eqn{hame0} acquires the form
\be
\label{hamlambda}
H_{\lambda}={-}\int_I du \left[\partial_i \lambda_t\left(X^{'i}+\Lambda^{ij}\zeta_j \right)+\lambda_t\left(\zeta_i E^i \right)  \right].
\end{equation}
Moreover,  the Hamiltonian vector fields associated with the Jacobi structure close under Lie bracket,  it being  (see Sec. \ref{Jacobiman})
\be
[\xi_{\lambda_t},  \xi_{\gamma_t}]= \xi_{\{\lambda_t,\gamma_t\}_J}.
\ee
Thus, we can state the following
\begin{thm}\label{prop2}\emph{
\mbox{}
\begin{enumerate}[(i)]%[label=(\roman*)]
\item 
For each $\lambda$ 
there exists an Hamiltonian vector field  associated with $H_\lambda$ such that its projection onto $PM$  is the Hamiltonian vector field $\xi_{\lambda_t}$ associated with $\lambda_t$ through Jacobi bracket. The vector fields $\xi_{\lambda_t}$ are  infinitesimal generators of space-time diffeomorphisms.
%, which close under Jacobi bracket.   
\item The map
$\;
H_\lambda\rightarrow \xi_{\lambda}\;
$
is a Lie-algebra homomorphism
\be
\{H_\lambda, H_{\tilde \lambda}\}= H_{\{\lambda_t, \tilde \lambda_t\}_J}
\ee
where $\{\;,\;\}$ is the canonical Poisson bracket on the phase space of maps, $\{\;,\;\}_J$ is the Jacobi bracket on $PM$ (with $PM\ni X$ the configuration space).
\item 
The Jacobi sigma model  is gauge-invariant under space-time diffeomorphisms.
\end{enumerate}}
\end{thm}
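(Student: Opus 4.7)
The plan is to address the three parts in order, using Proposition \ref{propdiff} to handle (i) at once and then reducing (iii) to an easy consequence of (i) and (ii); the real work is concentrated in (ii).

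For (i), Proposition \ref{propdiff} already establishes that, upon setting $\beta_i=\partial_i\lambda_t$ and $\lambda_u=0$, the canonical Hamiltonian vector field $\xi_{H_{\beta,\lambda}}$ defined by \eqn{canonicaldiff} projects along the bundle map $(X,\zeta)\to X$ to the Jacobi Hamiltonian vector field $\xi_{\lambda_t}$ of \eqn{hamvectfieldjacobistructure}. So $H_\lambda$ does generate time-like diffeomorphisms on $PM$. To cover spatial diffeomorphisms as well, I would mimic the Poisson sigma model argument: the first-order action \eqn{jacobiaction} exhibits an exchange symmetry between $(\dot X,\beta)$ and $(X',-\zeta)$ data, so a suitable choice of the smearing $\lambda_t$ built out of $\zeta$ (analogous to $\beta_\ell=f(u)\zeta_\ell$ for the Poisson case) will make $\xi_{H_\lambda}$ coincide with an arbitrary $f(u)\partial_u$. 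Combined with the time flow, this gives the full $\mathrm{Diff}(\Sigma)$ action.

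For (ii), the core computation is the canonical Poisson bracket $\{H_\lambda,H_{\tilde\lambda}\}$ using \eqn{canb} on the reduced Hamiltonian \eqn{hamlambda}. I would expand the two integrals, let the functional derivatives act (taking care of the $X$-dependence of $\lambda_t$, $\Lambda^{ij}$ and $E^i$), integrate by parts the $\delta'(u-v)$ terms, and collect. The resulting bulk integrand is a sum of three kinds of terms: one proportional to $\Lambda^{pi}\partial_p\Lambda^{jk}+\mathrm{cycl}(ijk)$ times $\partial_i\lambda_t\,\partial_j\tilde\lambda_t\,\zeta_k$; one proportional to $E^k\partial_k\Lambda^{ij}-\Lambda^{kj}\partial_k E^i-\Lambda^{ik}\partial_k E^j$ times $\lambda_t\partial_i\tilde\lambda_t\,\zeta_j$ (and its symmetric); and contributions enforcing the Reeb part of the Jacobi bracket. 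The identities \eqn{jacobiidentitygen} and \eqn{liederivelambda} recast the first two pieces exactly into terms of $\{\lambda_t,\tilde\lambda_t\}_J=\Lambda^{ij}\partial_i\lambda_t\partial_j\tilde\lambda_t+\lambda_t E^j\partial_j\tilde\lambda_t-\tilde\lambda_t E^j\partial_j\lambda_t$. Matching everything against \eqn{hamlambda} yields $H_{\{\lambda_t,\tilde\lambda_t\}_J}$ modulo boundary terms, which vanish thanks to the boundary condition $\eta(u)v=0$ on $\partial\Sigma$. This shows both the Lie-algebra homomorphism and, as a corollary, that the algebra of constraints closes, confirming the first-class nature announced earlier in the paper.

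Finally, (iii) follows immediately: by \eqn{hame1} the Hamiltonian is itself a linear combination of first-class constraints $\mathcal{C}_1^i$ and $\mathcal{C}_2$, and by (i)–(ii) the flows they generate on the extended phase space project to diffeomorphisms of $\Sigma$ on $PM$ and close under the canonical bracket. Hence these flows are genuine gauge symmetries of the action \eqn{jacobiaction}, proving the model is gauge-invariant under $\mathrm{Diff}(\Sigma)$.

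The main obstacle is part (ii): the bracket computation mixes $u$-derivatives of delta functions with functional $X$-derivatives of $\lambda_t$, $\Lambda^{ij}$ and $E^i$, so the bookkeeping is delicate. Moreover, one has to verify that all terms not explicitly of the form $H_{\{\lambda_t,\tilde\lambda_t\}_J}$ either cancel by antisymmetrization or are proportional to the constraints $\mathcal{C}_1^i$, $\mathcal{C}_2$, $\mathcal{C}_3$ of \eqn{constraint1}–\eqn{constraint3}, hence vanish on $\mathcal{C}$. It is precisely the interplay of \eqn{jacobiidentitygen} with \eqn{liederivelambda} that makes this cancellation work, so the Jacobi conditions are used in full.
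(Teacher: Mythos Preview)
Your proposal is correct and follows essentially the same approach as the paper: part (i) via Proposition~\ref{propdiff} together with the $\dot X\leftrightarrow X'$, $\beta\leftrightarrow -\zeta$, $\lambda_t\leftrightarrow\lambda_u$ exchange symmetry; part (ii) by a direct computation of $\{H_\lambda,H_{\tilde\lambda}\}$ using the canonical bracket \eqn{canb} and the structural identities \eqn{jacobiidentitygen}--\eqn{liederivelambda}; and part (iii) as an immediate corollary of first-classness. The only organizational difference is that the paper first isolates the local brackets $\{\mathcal{C}_1^i(u),\mathcal{C}_1^j(v)\}$ and $\{\mathcal{C}_1^i(u),\mathcal{C}_2(v)\}$ and then smears, whereas you work directly with the smeared Hamiltonians---the content of the calculation is the same.
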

%[Being $\lambda_u=0$,  $\lambda$ stands for $\lambda_t$ unless explicitly stated.]

%The latter are   the Hamiltonian vector fields associated with the Hamiltonian \eqn{hamlambda} through the Jacobi structure.  

\begin{proof} 

The first statement is a direct consequence of Prop. \ref{propdiff}. 
%According to the definition \eqn{Hamvec}, with  any smooth function on $M$, it is possible to associate a Hamiltonian vector field through the Jacobi structure, represented by 
%\begin{equation}
%\xi_f=\left(\Lambda^{ij}\partial_i f+fE^i \right) \frac{\del}{\del X^j}.
%\end{equation}
%On applying the definition to $\lambda_t$, we get
%\begin{equation}\label{hamjac}
%\xi_{\lambda_t}=\left(\Lambda^{ij}\partial_i \lambda_t+\lambda_t E^i \right) \frac{\del}{\del X^j}.
%\end{equation}
%From the equations of motion \eqn{expleoms} and on using the result for $\beta_i$, we have $\Lambda^{ij} \partial_j \lambda_t+\lambda_t E^i= \dot{X}^i$, which implies that
%\begin{equation}
%\xi_{\lambda_t}=\dot{X}^i  \frac{\del}{\del X^i}.
%\end{equation}
%for $\beta_i= -\del_i \lambda_t$
%The latter is the projection onto $M$ of the Hamiltonian vector field associated with the Hamiltonian \eqn{hamlambda} through the canonical Poisson bracket \eqn{canb} defined on $T^*\Sigma$
%\be
%\xi_{\mathcal{H}}= \dot X^i \frac{\del}{\del X^i} + \dot \zeta_i \frac{\del}{\del \zeta_i}
%\ee
%to wit 
%\be\xi_{\lambda_t}= \pi^* \xi_{\mathcal{H}}.\label{hamproj}
%\ee
%But 
The Hamiltonian is a pure constraint, therefore the model is invariant under time diffeomorphisms generated by  $\xi_{\lambda_t}$. Similarly to the Poisson sigma model, space-diffeomorphisms are recovered if one considers that the action  is invariant under  $\dot X\leftrightarrow X'$ , $\beta \leftrightarrow  - \zeta$ and  $\lambda_t\leftrightarrow \lambda_u$.  
%These infinitesimal symmetries are gauge transformations, being the Hamiltonian first class. 
\\
The second  statement is proven by  direct calculation. 
We first compute
\beqa
\{\mathcal{C}^i_1(u), \mathcal{C}_2(v) \}&=&\del_u \delta(u-v) E^i + \Lambda^{k j} \del_k E^i \zeta_j \delta(u-v)
\\
\{\mathcal{C}^i_1(u), \mathcal{C}^j_2(v) \}&=&\left(\Lambda^{il}\del_l \Lambda^{jm} -\Lambda^{jl}\del_l \Lambda^{im} \right)\zeta_m \delta(u-v)
\eeqa
We have then 
\beqa
\{H_\lambda, H_{\tilde \lambda}\}&=& \int du\,du'\, \mathscr{L}_{\xi_{H_\lambda}} H_{\tilde \lambda}
 \nonumber \\
&=& \int du \, du' \left[ \mathcal{C}_1^i \del_i \{\lambda_t, \tilde \lambda_t\}_J+ \mathcal{C}_2 \{\lambda_t, \tilde \lambda_t\}_J \right] = H_{\{\lambda_t, \tilde \lambda_t\}_J}.
 \eeqa
 The latter implies that the Hamiltonian constraints are first class, thus generating gauge transformations with infinitesimal generators $\xi_{\lambda_t}$ and Lie bracket
\be
[\xi_{\lambda_t},\xi_{\tilde\lambda_t}]= \xi_{\{{\lambda_t},\tilde\lambda_t\}_J}.
\ee
The last statement is therefore proven.  
 \end{proof}

Now we are in a position to prove the remarkable  result that the reduced phase space $\mathcal{G}$ is finite dimensional. In fact, we have the following 
\begin{thm}\label{mainthm}\emph{
Let $(X, \zeta) \in \mathcal{C} \subset T^* PM$. The subspace of $T_{(X, \zeta)}(T^*PM)$ spanned by the Hamiltonian vector fields $\xi_{\beta, \lambda_t}$   is a closed subspace of codimension $2 \text{dim}(M) {-2}$.}
\end{thm}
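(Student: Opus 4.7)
The plan is to adapt the Cattaneo--Felder calculation for the Poisson sigma model \cite{Cattaneo2001}, where the analogous codimension is $2\dim M$, with corrections accounting for the two novel features of the Jacobi case: the extra constraint $\mathcal{C}_2=E^i\zeta_i=0$ and the Reeb contribution to the Hamiltonian vector fields. Using Proposition~\ref{propdiff} and Theorem~\ref{prop2}, after setting $\lambda_u=0$ and $\beta_i=\partial_i\lambda_t$, the tangent vector to the gauge orbit at $(X,\zeta)\in\mathcal{C}$ is parametrised by a single function $\lambda_t$ on $\Sigma$, with
\begin{equation*}
\xi_{\lambda_t}=\bigl(-\Lambda^{ij}\partial_j\lambda_t+E^i\lambda_t\bigr)\frac{\partial}{\partial X^i}+\bigl(\partial_u\partial_i\lambda_t-\partial_i\Lambda^{jk}\partial_j\lambda_t\,\zeta_k\bigr)\frac{\partial}{\partial \zeta_i}\,,
\end{equation*}
read off from \eqn{expleoms}, with $\lambda_t$ required to vanish on $\partial\Sigma$.

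The core step is to characterise the annihilator of this image. I would pair $\xi_{\lambda_t}$ with a covector $(u^i(u),v_i(u))$ on $T^*PM$ and, after integration by parts in $u$ (permitted by $\lambda_t|_{\partial\Sigma}=0$), rewrite the pairing as a linear functional of $\lambda_t$. Setting it to zero for every admissible $\lambda_t$ produces a first-order linear ODE system on $I$ for $(u^i,v_i)$, coupled to boundary equations at $u=0,1$. Consistency of the bulk system relies on the Jacobi identities \eqn{jacobiidentitygen}--\eqn{liederivelambda}, and its solution space is finite-dimensional, determined by the Cauchy data at $u=0$ subject to the boundary equations at $u=1$. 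A careful count, mirroring the Poisson bookkeeping which yields $2\dim M$, produces a $-2$ correction from the Jacobi ingredients: the constraint $E^i\zeta_i=0$ removes one $\zeta$-direction from the boundary data, while the new Reeb term $E^i\lambda_t$ in $\delta X^i$ introduces an additional gauge identification removing one $X$-direction. Equivalently, the same $-2$ emerges through the Poissonization picture: the Poisson sigma model on $(M\times\R,P)$ has reduced phase space of dimension $2(\dim M+1)$, and the second-class constraints on $(\eta_0,\lambda_u)$ together with the projection to $M$ remove four further dimensions.

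Closedness of the gauge orbit subspace follows directly from Theorem~\ref{prop2}: the vector fields $\xi_{\lambda_t}$ close under the Lie bracket, so the distribution they span on $\mathcal{C}$ is involutive and locally integrable by Frobenius. The main technical obstacle I anticipate is the rigorous justification of the $-2$ correction, i.e.\ establishing that the bulk ODE system has exactly the claimed number of independent solutions. Identifying the transverse directions explicitly with boundary data requires a fine use of the Jacobi identities \eqn{jacobilambda} for $(\Lambda,E)$ and a careful analysis of the interplay between the algebraic constraint $\mathcal{C}_2$ and its dual gauge direction, which has no counterpart in the Poisson case and is responsible for the discrepancy with the Cattaneo--Felder result.
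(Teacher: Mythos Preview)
Your proposal contains two genuine gaps.

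First, by invoking Proposition~\ref{propdiff} you restrict the gauge parameters to the exact case $\beta_i=\partial_i\lambda_t$, parametrising the Hamiltonian vector fields by a single function $\lambda_t$. But Theorem~\ref{mainthm} concerns the span of $\xi_{\beta,\lambda_t}$ with $\beta$ and $\lambda_t$ \emph{independent} (subject only to the boundary condition and the second-class constraint $E^i\beta_i=0$). Proposition~\ref{propdiff} merely singles out a subclass of gauge generators that project nicely to Jacobi Hamiltonian vector fields on $PM$; it does not describe the full gauge orbit. Under your restriction the $\zeta$-component of $\xi$ becomes a second-order differential operator in $\lambda_t$, whose image is strictly smaller than that obtained from an arbitrary $\beta$, so the codimension you would find is far larger than $2\dim M-2$. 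The paper instead keeps $\beta$ free, shows that the map $(\beta,\lambda_t)\mapsto\xi_{\beta,\lambda_t}$ is injective, and characterises its image via the path-ordered exponential $V$ of ${A_i}^j=\partial_i\Lambda^{jk}\zeta_k$: a tangent vector $(\tilde X,\tilde\zeta)$ to $\mathcal C$ lies in the image iff $\tilde X^i(0)=E^i(X(0))\lambda_t(0)$ and $\int_I V_i^{\,k}\tilde\zeta_k\,du=0$. Choosing a frame with $E$ as the $m$-th basis vector and using $\beta_m=\zeta_m=0$ on $\mathcal C$, these amount to exactly $2m-2$ independent linear conditions; this is where the $-2$ actually enters, not through a heuristic Poissonization count.

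Second, your closedness argument conflates two distinct notions. ``Closed'' in the theorem means closed as a linear subspace of the infinite-dimensional tangent space $T_{(X,\zeta)}(T^*PM)$, not closed under Lie bracket. Involutivity and Frobenius integrability of the gauge distribution are irrelevant here. In the paper closedness is immediate once the image is exhibited as the common zero set of $2m-2$ continuous linear functionals (the boundary evaluations and the $V$-weighted integrals above). Your annihilator strategy could in principle reproduce this, but only if you keep $\beta$ as an independent parameter; as written it does not.
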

\begin{proof}
%Due to the similarity of the situation for the Poisson sigma model, for this proof we follow the same strategy adopted in \cite{Cattaneo2001}.
We follow for the proof the same approach as in  \cite{Cattaneo2001} where the theorem is shown to hold for the Poisson sigma model \eqn{sigmaP}.

Let us consider the subspace $\mathcal{S}_{(X, \zeta)}$ of $T_{(X,\zeta)}\mathcal{C}$ spanned by the Hamiltonian vector fields $\xi_{\beta, \lambda_t}$.
%, where the latter are the generators of the gauge group of diffeomorphisms, $\text{Diff}(\Sigma)$ . 
The map 
$(\beta, \lambda_t) \rightarrow \xi_{ \beta, \lambda_t}$, explicitly given by 
\beqa
\delta_{\xi_{H} }X^i & \coloneqq& \{H_{\lambda_t}, X^i \}=-\Lambda^{ij}  \beta_j+E^i \lambda_t \label{trasf1}\\
\delta_{\xi_{H} } \zeta_i &\coloneqq& \{H_{\lambda_t}, \zeta_i \}=(\beta_i)'-\partial_i \Lambda^{jk} \beta_j \zeta_k+\partial_i E^j \zeta_j \lambda_t \label{trasf2}
\eeqa
 is linear. However, on the constraint manifold $\mathcal{C}$, the last term in the r.h.s. of \eqn{trasf2} vanishes: \begin{equation*}
\partial_i E^j \zeta_j \lambda_t=-E^j \partial_i \zeta_j \lambda_t=0,
\end{equation*}
where the first equality comes from the constraint $\zeta_i E^i=0$, while the second one follows from the fact that $X$ and $\zeta$ are canonically conjugated. Therefore, the components of the map are given by
\begin{equation}\label{hamvectcomp1}
\xi_1^i \coloneqq \xi_{\beta, \lambda_t} X^i =-\Lambda^{ij}  \beta_j+E^i \lambda_t ,
\end{equation}
\begin{equation}  \label{hamvectcomp2}
\xi_{2,i} \coloneqq \xi_{ \beta, \lambda_t} \zeta_i=(\beta_i)'-\partial_i \Lambda^{jk}\beta_j \, \zeta_k.
\end{equation}
Let us start by analysing the kernel of this linear map. In particular, from $\xi_{\beta, \lambda_t}=0$ we obtain
\begin{equation}
-\Lambda^{ij} \beta_j+E^i \lambda_t=0
\end{equation}
and 
\begin{equation}
(\beta_i)'-\partial_i \Lambda^{jk} \beta_j \, \zeta_k=0.
\end{equation}
The second one is a homogeneous linear first order ODE with initial condition $( \beta(0)=0$ (because of the boundary conditions), so the solution vanishes identically. The first one is an algebraic relation for which, by using the previous result, we have $E^i \lambda_t=0$ and since the Reeb vector field is nowhere vanishing we have $\lambda_t=0$. Hence, the map is injective and it is sufficient to look at the image space. 

The tangent vector $(\tilde{X}, \tilde{\zeta})$ to a point $(X, \zeta) \in \mathcal{C}$ is the solution of the linearized constraints
\begin{equation}\label{linearizedconstr1}
\tilde{X}'^i+{A_j}^i \tilde{X}^j+\Lambda^{ij}\tilde{\zeta}_j=0,
\end{equation}
\begin{equation}\label{linearizedconstr2}
E^i \tilde{\zeta}_i+\partial_i E^j \zeta_j \tilde{X}^i\underset{\mathcal{C}}{=}E^i \tilde{\zeta}_i=0,
\end{equation}
where we defined ${A_i}^j=\partial_i \Lambda^{jk}\zeta_k$. If $(\tilde{X}, \tilde{\zeta})$ is Hamiltonian (i.e. it is in the image of $\xi$), then we have
\begin{equation}
\tilde{X}^i=-\Lambda^{ij} \beta_j+E^i \lambda_t,
\end{equation} 
\begin{equation}\label{proofeq1}
\tilde{\zeta}_i=(\beta_i)'-{A_i}^j \beta_j,
\end{equation}
and in particular
\begin{equation*}
\tilde{X}^i(0)-E^i (X(0))\lambda_t(0)=0.
\end{equation*} 
If we introduce the matrix $V=\hat{P}\exp[-\int A\, du ]$ as the path-ordered exponential of $A$, i.e. the solution of the differential equation
\begin{equation}\label{defvmatrix}
\begin{cases}
(V_i^j)'=-V_i^k(u) {A_k}^j(u) \\
V_i^j(0)=\delta_i^j,
\end{cases}
\end{equation}
then Eq. (\ref{proofeq1}) can be written as
\begin{equation}
\tilde{\zeta}_i (u)=(V^{-1}(u))_i^k \, \partial_u[V(u) ^j_k \beta_j (u)].
\end{equation}
From this equation we can define the $m$ functions
\begin{equation}
p_i(u) \coloneqq \int_0^u dv V(v)^k_i \tilde{\zeta}_k(v)= \int_0^u \partial_v[V(v) ^k_i  \beta_k (v)],
\end{equation}
from which it follows that 
\begin{equation*}
\int_I du \, V(u)_i^k \tilde{\zeta_k}(u)=0.
\end{equation*}
Hence, we conclude that if $(\tilde{X}, \tilde{\zeta})$ is in the image of $\xi$, then we have 
\begin{equation}\label{conditionsproofdim}
\tilde{X}^i(0)-E^i (X(0))\lambda_t(0)=0, \quad \ \quad  \int_I du \,  V(u)_i^k \tilde{\zeta_k}(u)=0.
\end{equation} 
If we choose a basis of vector fields with $E$ as one of the basis vectors, say the $m-th$, it is possible to see that these are $2m-2$ independent conditions.  {Indeed, by posing $a= 1, ..., m-1$, the first of Eqs. \eqn{conditionsproofdim} yields
\be 
 \tilde{X}^a(0)=0, \;\;\;  \tilde{X}^m(0) =\lambda_t(0)
\ee
but the latter is not a gauge invariant statement. As for the second of Eqs. \eqn{conditionsproofdim} we need to go back to the constraints $\mathcal{C}_2, \mathcal{C}_3$ which now imply $\beta_m=\zeta_m=0$. Then the whole derivation above, starting from Eq. \eqn{defvmatrix} has to be repeated for the sole indices ranging from 1 to $m-1$. We end up with $m-1$ functions $p_a(u)$, yielding $m-1$ invariants.}

Viceversa, given $(\tilde{X}, \tilde{\zeta})$ a tangent vector at the point $(X, \zeta) \in \mathcal{C}$ satisfying the conditions in Eq. (\ref{conditionsproofdim}), then $(\tilde{X}, \tilde{\zeta})$ is an Hamiltonian vector field with the choice $ \beta_i=(V^{-1})^k_i \, p_k=(V^{-1})^k_i \int_0^u dv \, V(v)^{\ell}_k \tilde{\zeta}_{\ell}(v)$. In fact, let us define a vector field 
\begin{equation}\label{defYfield}
Y^i(u)=-\Lambda^{ij}(u)\beta_j (u) + E^i(u) \lambda_t(u),
\end{equation}
(hence fulfilling  $Y^i(0)=E(0)^i \lambda_t (0)$), with 
\begin{equation}\label{defbetaproof}
 \beta_i=(V^{-1})^k_i \int_0^u dv \, V(v)^{\ell}_k \tilde{\zeta}_{\ell}(v).
\end{equation}
Let us check that $Y$ satisfies the same ODE as $\tilde X$ with the same boundary condition, hence it is the same field. To this we compute
\begin{equation*}
\begin{aligned}
Y'^i {} & =-\partial_k \Lambda^{ij} X'^k (V^{-1})^p_j \int_0^u dv V_p^{\ell} \tilde{\zeta}_{\ell}-\Lambda^{ij}\left[\partial_u (V^{-1})^p_j \int_0^u dv V_p^k \tilde{\zeta}_k+(V^{-1})^p_j V_p^k \tilde{\zeta}_k \right] \\ & +\partial_k E^i X'^k \lambda_t+E^i \lambda'_t,
\end{aligned}
\end{equation*}
and by using the relation $\lambda'_t=-\Lambda^{ij}\beta_i \zeta_j$ and the constraint equation $X'^i=-\Lambda^{ij} \zeta_j$, we have
\begin{equation*}
Y'^i = \beta_j \zeta_p \left( \Lambda^{kp} \partial_k \Lambda^{ij}+\Lambda^{ik}\partial_k \Lambda^{pj}-E^i \Lambda^{pj}\right)-\Lambda^{ij} \tilde{\zeta}_j-\partial_k E^i \Lambda^{kp} \zeta_p \lambda_t,
\end{equation*}
where we also used the choice of $ \beta$ in Eq. (\ref{defbetaproof}) and the relation $\partial_u {(V^{-1})^p}_j=-(V^{-1})^k_j \partial_u V_k^{\ell} (V^{-1})^p_{\ell}$, as well as the defining equation of $V$ (\ref{defvmatrix}). At this point we can use Eq. (\ref{defYfield}) and the relation for $[\Lambda, \Lambda]_S$ in Eq. (\ref{jacobiidentitygen}) to obtain
\begin{equation*}
Y'^i=-\partial_k \Lambda^{ip} \zeta_p Y^k-\Lambda^{ij}\tilde{\zeta}_j+\left(\partial_k \Lambda^{ip} E^k-\partial_k E^i \Lambda^{kp} \right) \zeta_p \lambda_t.
\end{equation*}
Using the fact that $\mathscr{L}_E \Lambda=0$, namely Eq. (\ref{liederivelambda}), we have
\begin{equation*}
\left(\partial_k \Lambda^{ip} E^k-\partial_k E^i \Lambda^{kp} \right) \zeta_p \lambda_t=\Lambda^{ik} \partial_k E^p \zeta_p \lambda_t,
\end{equation*}
but   $\partial_k E^p  \zeta_p=0$ on $\mathcal{C}$,  so that  finally $Y$ satisfies the linearised constraint in Eq. (\ref{linearizedconstr1}) with the same boundary condition. One can also easily prove that the linearised constraint in Eq. (\ref{linearizedconstr2}) is satisfied as well. In fact, contracting Eq. (\ref{hamvectcomp2}) with the Reeb vector field we obtain
\begin{equation*}
\begin{aligned}
{} & E^i (\beta_i)'-E^i \partial_i \Lambda^{jk} \beta_j\zeta_k \\ &= E^i (\beta_i)'- \beta_j \zeta_k \Lambda^{pk} \partial_p E^j=E^i ( \beta_i)'+ \beta_i\partial_j E^i X'^j=(E \cdot \beta)'=0,
\end{aligned}
\end{equation*}
where in the first equality we used the expression for the Lie derivative in Eq. (\ref{liederivelambda}) and again $\partial_k E^p  \zeta_p \underset{\mathcal{C}}{=}0$, while the last equality follows from the constraint (\ref{eomjacobi3}).

To conclude, we have proved that the image of $\xi$ is the subspace spanned by $\xi_{\beta, \lambda_t}$ modulo the $2m {-2}$ conditions in Eq. (\ref{conditionsproofdim}), i.e. it is a closed subspace of codimension $2m {-2}$.
%One can ask if there exists some kernel matrix $K$ such that $K \cdot \tilde{\zeta}=(K \cdot \partial \lambda_t)'$, since in this case we would have, because of the boundary condition on $\partial \lambda_t$, $\int_I du K \cdot \tilde{\zeta}=\int_I du (K \cdot \partial \lambda_t)'=0$, so that $\tilde{X}^i(0)-E^i (X(0))\lambda_t(0)=0$ and $\int_I du K \cdot \tilde{\zeta}=0$ could lead to the description of $\text{Im}(\xi)$ as the common kernel of $2 \text{dim}(M)$ linearly independent linear functions if we manage to prove that, viceversa, if $(\tilde{X}, \tilde{\zeta})$ satisfy $\tilde{X}^i(0)-E^i (X(0))\lambda_t(0)=0$ and $\int_I du {K^i}_j \tilde{\zeta}_i=0$, then $(\tilde{X}, \tilde{\zeta})=\xi_{\partial \lambda_t, \lambda_t}$. Indeed, such a kernel matrix exists and it is the path-ordered integral of $\psi$, i.e. the solution of the differential equation
%\begin{equation}
%\begin{cases}
%(V^j_i)'=V^j_k(u) {\psi^k}_i(u) \\
%V^j_i(0)=\delta_i^j.
%\end{cases}
%\end{equation}
%Hence, Eq. \eqn{proofeq1} is equivalent to $V \cdot \tilde{\zeta}=(V \cdot \partial \lambda_t)'$.
\end{proof}

To summarise the results, the reduced phase space of the model is $\mathcal{G}= \mathcal{C}/\text{Diff}(\Sigma)$, where  $\mathcal{C}$ indicates the space of solutions of Eqs. \eqn{constraints} while $\text{Diff}(\Sigma)$ is the gauge group of diffeomorphisms generated by the constraints, which are in turn associated with the Jacobi structure. The dimension of $\mathcal{G}$ is finite and equal to $2 \text{dim}(M) {-2}$.

%%%%%%%%%%%%%%%%
%%%%%%%%%%%%%%%%%%

\section{Metric extension and Polyakov action}\label{secfulljacobi}

We will show in this section that, just like in the Poisson sigma model case \cite{Schupp2012}, the topological model considered so far can be  generalised into a non-topological model by introducing a dynamical term containing the metric of the worldsheet (via the Hodge star operator on $\Sigma$) and a metric tensor $G$ for the target space:
\begin{equation}
S(X, \eta, \lambda)=\int_{\Sigma} \left[\eta_i \wedge dX^i+\frac{1}{2}\Lambda^{ij}(X) \,\eta_i \wedge \eta_j-E^i(X) \,\eta_i \wedge \lambda +\frac{1}{2}(G^{-1})^{ij}(X) \, \eta_i \wedge \star \eta_j \right].
\end{equation}
We are now concerned with the integration of the auxiliary fields ($\eta$ and $\lambda$) to obtain a Polyakov action for the embedding maps $X$. To do this, we first write the new equations of motion following by the introduction of the new metric term:
\begin{equation}\label{eometric1}
dX^i+\Lambda^{ij}\eta_j-E^i \lambda+(G^{-1})^{ij} \star \eta_j=0,
\end{equation}
\begin{equation}\label{eometric2}
d\eta_i+\frac{1}{2}\partial_i \Lambda^{jk}\eta_j \wedge \eta_k-\partial_i E^j \eta_j \wedge \lambda+\frac{1}{2}\partial_i (G^{-1})^{jk}\eta_j \wedge \star \eta_k=0,
\end{equation}
\begin{equation}\label{eometric3}
E^i \eta_i=0.
\end{equation}
Thanks to the new term and the fact that the metric tensor $G$ is naturally non-degenerate, the equation for $\eta$ can be extracted from Eq. (\ref{eometric1}):
\begin{equation}\label{eqstareta}
\star \eta_j=-G_{ij}\left(dX^i+\Lambda^{ik}\eta_k-E^i \lambda \right).
\end{equation}
On  applying again the Hodge star operator (we choose the metric signature $(1,-1)$ for $\Sigma$, so in this case $\star^2=\mathds{1}$) and substituting back the expression (\ref{eqstareta}) for $\star \eta$ we have 
\begin{equation}\label{eqeta}
\eta_p=-{(M^{-1})^j}_p G_{ij}\left(\star dX^i-\Lambda^{ik}G_{\ell k} dX^{\ell}+\Lambda^{ik}G_{\ell k}E^{\ell}\lambda-E^i \star \lambda \right),
\end{equation}
where we defined the matrix ${M^p}_j={\delta^p}_j-G_{j i}\Lambda^{ik}G_{k \ell}\Lambda^{\ell p}$, which is symmetric and assumed to be non-degenerate, without any assumption on the non-degeneracy of the $\Lambda$ bivector. 

Remarkably,  by substituting  the expression for $\star \eta$ into the term $\frac{1}{2}(G^{-1})^{ij} \eta_i \wedge \star \eta_j$, the action acquires the simple form
\begin{equation}
S=\frac{1}{2}\int_{\Sigma} \eta_i \wedge dX^i,
\end{equation}
where we also used the fact that on-shell $E^i \eta_i=0$. Replacing the explicit expression  for $\eta$,  Eq. (\ref{eqeta}), in the action we obtain
\begin{equation}\label{actionlambda}
\begin{aligned}
S(X, \lambda) = \int_{\Sigma} {} & \bigg[\frac{1}{2}{(M^{-1})^p}_i G_{jp} \, dX^i \wedge \star dX^j- \frac{1}{2}{(M^{-1})^p}_i G_{\ell p} \Lambda^{\ell k} G_{jk} \, dX^i \wedge dX^j \\ & - \frac{1}{2}{(M^{-1})^p}_i G_{\ell p}\Lambda^{\ell k} G_{mk} E^m \lambda \wedge dX^i+ \frac{1}{2}{(M^{-1})^p}_i G_{\ell p} E^{\ell} \star \lambda \wedge dX^i \bigg].
\end{aligned}
\end{equation}
There is still $\lambda$ to be integrated out. This can be achieved by recognising  $\int_{\Sigma} \star \lambda \wedge dX$ as the  scalar product on the space of $1-$forms so that $\int_{\Sigma} \star \lambda \wedge dX=-\int \lambda \wedge \star dX$. Thus  the last two terms in Eq. (\ref{actionlambda}) are proportional to $\lambda$. The latter acting as a Lagrange multiplier, imposes the constraint 
\begin{equation}\label{constraintpolyakov}
(M^{-1})_{i \ell}\left( \Lambda^{\ell k}G_{mk}E^m dX^i+E^{\ell} \star dX^i\right)=0
\end{equation}
where we used the metric tensor $G$ to lower and  raise the target space indices.
This means that on-shell the term proportional to  $\lambda$ vanishes and what remains is the second order action
\begin{equation}\label{PolJac}
S=\int_{\Sigma} \left[g_{ij} dX^i \wedge \star dX^j+B_{ij} dX^i \wedge dX^j \right]
\end{equation}
with metric and $B$-field given by:
\begin{equation}
g_{ij}=G_{jp} {(M^{-1})^p}_i, \quad B_{ij}=G_{ik}{(M^{-1})^p}_j G_{ p \ell}\Lambda^{\ell k}.
\end{equation}
Eq. \eqn{PolJac} represents a Polyakov string action with target space a Jacobi manifold, with the Jacobi structures hidden in the metric and B-field. Note that the Reeb vector field $E$ plays no role in the definition of $g$ and $B$ but it is present in the constraint (\ref{constraintpolyakov}).

\section{Jacobi sigma model on SU(2)}\label{secExamples}

In this section we consider the group manifold of $SU(2)$ as target space. It provides an example of a contact manifold where the contact structure can be taken to be one of the left-invariant basis one forms of the group, say $\theta^i$ so that $\ell^{-1}\bd \ell=\theta^i e_i \in \Omega^1(SU(2), \mathfrak{su}(2))$, is the Maurer-Cartan left-invariant  one form on the group,  with $\ell\in SU(2)$, $e_i $ the Lie algebra generators and choose, to be definite, $\vartheta=\theta^3$ as contact structure for $SU(2)$ (right invariant one-forms could be used equivalently).
It is easily checked that it  satisfies the conditions \eqn{contactcond}-\eqn{contactcond3}. Indeed, the Maurer-Cartan equation $\bd\theta^k=\frac{1}{2} {\epsilon^k}_{ij} \theta^i \wedge \theta^j$ leads to 
\begin{equation}
\bd\vartheta=\theta^1 \wedge \theta^2,
\end{equation}
so that $\vartheta \wedge \bd\vartheta=\theta^1 \wedge \theta^2 \wedge \theta^3 = \text{Vol}_{S^3}$.

From Eqs. (\ref{contactcond}) we get for the Jacobi structure
\begin{equation}
\Lambda=Y_1 \wedge Y_2, \quad E=Y_3,
\end{equation}
where $Y_i$, with $i \in \{1,2,3 \}$ are the left-invariant vector fields on $SU(2)$.

In order to define  the fields $X^i, \dot X^i, X'^i$ in a chart independent way, we resort to the group valued map $\phi:\Sigma \rightarrow SU(2)$,  and the pull-back map  $\phi^*: \Omega^1(SU(2))\rightarrow \Omega^1(\Sigma)$,  so to get
\be
\phi^*(g^{-1} \bd g)= (g^{-1} \del_t g) dt + (g^{-1} \del_u g) du
\ee
which is a one-form on $\Sigma$, valued in the Lie algebra of $SU(2)$. We shall omit the pull-back from  now on, but it will be always understood, unless otherwise stated.
The action  \eqn{jacobiaction} may thus be written as follows 
 \be
 S(g, \eta, \lambda)=\int_{\Sigma} \left[\eta_i \wedge (g^{-1}dg)^i +\frac{1}{2}\epsilon^{3ij}\eta_i \wedge \eta_j-\eta_3 \wedge \lambda \right]
\end{equation}
where $g^{-1} dg= (g^{-1} dg)^ie_i$ is Lie algebra valued, while $\eta= \eta_i e^{i*}$ is valued in the dual of the Lie algebra and $e^{i*}(e_j)= \delta^i_j$.

The equations of motion acquire the form
\begin{equation}
(g^{-1}dg)^i+\epsilon^{3ij}\eta_j-\delta^{i3}\lambda=0
\end{equation}
and 
\begin{equation}
{d\eta_i+ \eta_i \wedge \eta_3 + \epsilon_{3i}^j \eta_j\wedge\lambda=0, \quad\quad \eta_3=0}.
\end{equation}

Differently from Poisson sigma models, despite the fact that $\Lambda$ is degenerate, the field  $\eta$ can be integrated out from the action according to the following procedure, which is valid for any contact manifold.  By using Eqs. (\ref{contactcond2})-(\ref{contactcond3}) we can contract the equation of motion in Eq. (\ref{eomjacobi1}) with $\vartheta= \vartheta_i \theta^i$ to obtain
\begin{equation}
\langle \vartheta, g^{-1}dg \rangle-\lambda \langle \vartheta, E \rangle = (\vartheta)_i (g^{-1}dg)^i-\lambda=0,
\end{equation}
with $\langle \cdot , \cdot \rangle$ the natural pairing between $T^*M$ and $TM$, 
so that $\lambda=\vartheta_i (g^{-1}dg)^i$ can be integrated out. To integrate the fields $\eta$ we can contract again Eq. (\ref{eomjacobi1}) with $d\vartheta$, and using again Eqs. (\ref{contactcond2})-(\ref{contactcond3}) we obtain
\begin{equation}
-(d\vartheta)_{ij} (g^{-1}dg)^j+\eta_i=0,
\end{equation}
so that $\eta_i=(d\vartheta)_{ij}(g^{-1}dg)^j$ can be integrated out as well. Substituting the expressions for $\lambda$ and $\eta$ in the action in Eq. (\ref{jacobiaction}), we obtain the second order action
\begin{equation}
S_2=-\frac{1}{2}\int_{\Sigma}  \langle d\vartheta ,(g^{-1}dg) \wedge (g^{-1}dg) \rangle,
\end{equation}
which has the same form as an A-model. In particular, $d\vartheta$ has the role of a $B$-field, and in this case it is closed. In particular, by writing $d\vartheta$ explicitly and by further introducing the notation
\be
(g^{-1} \del_t g)= A^i e_i, \;\;\;\; (g^{-1} \del_u g)= J^i e_i
\ee 
with $(A^i, J^i)$ the  currents of the sigma model, we have
\begin{equation}
S_2=-\frac{1}{2}\int_{\Sigma} \epsilon_{3ij} (g^{-1}dg)^i \wedge (g^{-1}dg)^j=\int_{\Sigma}  d^2u \, \epsilon_{3ij} A^i J^j.
\end{equation}

It is also interesting to specialise the Polyakov action obtained in Eq. (\ref{PolJac}) to the $SU(2)$ target manifold by introducing the natural Cartan-Killing metric on the latter: $G_{ij}=\delta_{ij}$. By using $G_{ij}=\delta_{ij}$ and $\Lambda^{ij}=\epsilon^{3ij}$, the metric $h$ and $B$-field are then obtained as 
\begin{equation}\label{backg}
h_{ij}= \delta_{ij}-\frac{1}{2}\epsilon_{ik3}\delta^{kl}\epsilon_{jl3}, \quad B_{ij}=-\frac{1}{2}\epsilon_{3ij},
\end{equation}
so to have 
\begin{equation}
S=\int_{\Sigma} \left[h_{ij} (g^{-1}dg)^i \wedge \star (g^{-1}dg)^j-\frac{1}{2}\epsilon_{3ij} (g^{-1}dg)^i \wedge (g^{-1}dg)^j \right],
\end{equation}
complemented with the constraint
\be
(g^{-1}dg)^3=0,
\ee
which follows from Eq. (\ref{constraintpolyakov}).

%The metric $h$ is defined as $h^{ij}=\delta^{ij}+\epsilon^{ip3}\delta_{pq}\epsilon^{jq3}$. 
It is interesting to note the metric $h$ in Eq. \eqn{backg}. This has been already  obtained in the context of  Poisson-Lie duality  of $SU(2)$ sigma models \cite{Marotta2019, Bascone2020, PV19, Bascone2020a,   Marotta2018, Pezzella2019} as a non-degenerate metric for   dual sigma models  with target manifold the group $SB(2, \C)$.  The latter  plays a role in the  Drinfel'd double decomposition of the group $SL(2,\mathbb{C})$.  Therefore it is an interesting question, which we leave for further investigation, to understand what is  the relation, if any, between the two models: the dynamical Jacobi sigma model on the manifold of the group $SU(2)$ and the $SU(2)$ sigma model with Poisson-Lie duality made explicit. 

\section{Conclusions and Outlook}

We have defined and analysed  a two-dimensional sigma model with target space a Jacobi manifold, as a natural generalisation of a Poisson sigma model. In particular, we started from the concept of Poissonization of a Jacobi manifold, which consists in the construction of a homogeneous Poisson structure on the extended manifold $M \times \mathbb{R}$ from a Jacobi structure on $M$. We projected the dynamics of this extended Poisson sigma model on the Jacobi manifold $M$ and then formulated a new sigma model action having $M$ as target space which reproduces the projected dynamics. This is schematically illustrated in the diagram \ref{fig:diagram}.
\begin{center}
\begin{figure}[ht]
\centering
\includegraphics[width=0.5\linewidth]{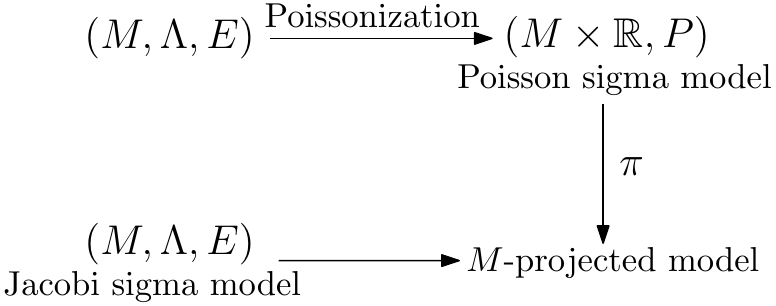}
\caption{}
\label{fig:diagram}
\end{figure}
\end{center}
We have analysed the Hamiltonian formulation of the model, which exhibits first class constraints generating gauge transformations. In particular, we have shown that Hamiltonian vector fields associated with the  Jacobi structure can be associated with gauge transformations generating space-time diffeomorphisms, and the model is topological. The reduced phase space of the model, which is the constrained manifold modulo gauge symmetries, has finite dimension equal to $2\rm{dim}M -2$.

We also investigated the possibility of including  a metric term in the action, resulting in a non-topological sigma model.  The auxiliary fields can be integrated out to give a Polyakov action, where the metric and $B$-field are related to the defining structures of the target Jacobi manifold. 

In particular, we analysed $SU(2)$ as an example of contact target manifold in view of its relation with Poisson-Lie symmetry and T-duality. 

Issues such as quantisation, integrability and T-duality of the Jacobi model represent interesting directions of research, some of which are presently under investigation.

 \vspace{5pt}

\noindent{\bf Acknowledgements} 
We are deeply indebted to Alberto Ibort and Giuseppe Marmo for having introduced us to Jacobi manifolds and having encouraged  to investigate their application to sigma models. 
\vspace{20pt}

\end{document}